\documentclass[11pt]{article}





\usepackage[utf8]{inputenc} 
\usepackage{hyperref}       
\usepackage{url}            
\usepackage{booktabs}       
\usepackage{amsfonts}       
\usepackage{nicefrac}       
\usepackage{microtype}      
\usepackage[pdftex]{graphicx}
\newcommand{\dist}{\mathsf{dist}}

\newcommand{\E}{I\!\!E}
\newsavebox{\fmbox}
\newcommand{\eps}{\varepsilon}
\newtheorem{lemma}{Lemma}
\newtheorem{theorem}{Theorem}

\newenvironment{proof}{\noindent{\bf Proof : }\small }{\normalsize}

\begin{document}
\title{The content correlation of multiple  streaming edges}

%

\author{Michel de Rougemont  $^1$ \and 
Guillaume Vimont$^1$}

\date{%
    $^1$ University Paris II,  CNRS-IRIF\\%
    \today
}

\maketitle

\begin{abstract}
  We study how to detect clusters in a graph defined by a stream of edges, without storing the entire graph. We  extend the approach to dynamic graphs defined by the most recent edges of the stream and  to several streams.  The {\em content correlation }of two streams  $\rho(t)$ is the Jaccard similarity of  their  clusters  in the windows before time $t$. We propose a simple and efficient method to approximate this correlation online and  show that for dynamic random graphs  which follow a power law degree distribution, we can guarantee a good approximation.  As an application, we follow Twitter streams  and compute their content correlations online. We then propose a {\em search by correlation} where answers to sets of keywords  are entirely based on the small correlations of the streams. Answers are ordered by the correlations, and explanations can be traced with the  stored clusters.\\\\
\end{abstract}

Keywords: Streaming algorithms,  Dynamic graphs, Clustering, Approximation

\section{Introduction}

Consider a stream of edges of a graph which follows a power law degree distribution.  Sliding windows define dynamic graphs $G_t$ and we select each edge with a uniform probability in each window. There are several techniques which generalize the original Reservoir sampling  \cite{V85} for a fixed window, to  dynamic windows.
At any given time $t$, we have a Reservoir which keeps $k$ edges among the possible $m$ edges  which have been read ($m>>k$), and each edge has the same probability $k/m$ to be chosen in the Reservoir. If the cluster $S$ and the Reservoir size $k$ are  large enough,  a large connected component  will appear in the Reservoir. The random edges of $S$ are taken with the same probability $p=k/m$, i.e. follow the Erd\"{o}s-Renyi model $G(n,p)$ and the giant component in $S$ occurs if $p=k/m>1/n=1/|S|$, the classical phase transition probability.
In the dynamic case, we will observe changes in the communities as new communities appear and old communities disappear. At discrete times $t_i$ we store only the large connected components of the Reservoirs. 

We study the correlation between multiple streams of graph edges and define the {\em content correlation} $\rho$ of two streams of edges, based on the Jaccard similarity of their clusters and extend it to $\rho(t)$ on the dynamic graphs $G_t$.  We provide an analysis of dynamic random graphs which follow a power law degree distribution, based on the Configuration Model \cite{N10}.  We give sufficient conditions to detect clusters depending on their size, the Reservoir size $k$  and the length of the observation. We then approximate the content correlation with an online algorithm, i.e. estimate the correlation $\rho(t_i)$ at discrete times $t_i$.

Streams of graph edges are ubiquitous, in particular in social networks  where the graphs have a degree distribution close to a power law, a small diameter and clusters (communities) which evolve in time.  Twitter streams are defined by some tags and generate  a stream of edges of large dynamic  graphs. We estimate the correlations between multiple streams and obtain a correlation matrix $A$, from which we build a phylogeny tree.  We then introduce a   {\em Search by correlation}: given some tags, we find the  most correlated tags which depend  on the history of the clusters and on the phylogeny tree. Our main results are:
\begin{itemize}
\item  We propose an Algorithm which  detects  large static clusters in a graph which follows a power law degree distribution,  with high probability,  using only a few edge samples  (theorem \ref{t1}) and  its extension to dynamic graphs (theorem \ref{t2}), 

\item  We present an online algorithm for $\rho(t)$. For two dynamic graphs with clusters $S$ and $S'$ whose Jaccard similarity  is $\rho^*$, the correlation   $\rho(t)$ is close to $\rho^*$ (theorem \ref{t3}).
\end{itemize}

We estimated the content correlations of $4$ Twitter streams  (approximately $2.10^6$ edges) over $24$h online and built the closest phylogeny tree. The {\em Search by correlation}  illustrates the technique which uses the correlations, a phylogeny tree and the stored clusters.
In the second section, we define the framework of  dynamic graphs defined by a stream of edges and introduce the notion of content correlation of two streams. In the third section, we set a model of random dynamic graphs where we can guarantee the approximation of the correlation. In the fourth section, we present our analysis of multiple Twitter channels and their correlations. In the fifth section, we introduce the Search by correlation, using a phylogeny tree built from the correlation matrix of the streams.

\section{Dynamic graphs in a window of streaming edges}

Let  $e_1, e_2,....e_i...$ be a stream of edges where each $e_i=(u,v)$. It defines a graph $G=(V,E)$ where $V$ is the set of nodes and $E\subseteq V^2$ is the set of edges: we allow self-loops and multi edges and assume that the graph is symmetric. In the {\em window model } we isolate the most recent edges at some discrete $t_1, t_2,...$. 
There are two models of sliding windows: the most recent $\tau$ edges or the most recent edges in some fixed time interval $\tau$. If the rate (the number of edges per time unit) of the stream is fixed, both models coincide. It is not the case in practice, as the rate fluctuates within a factor $2$ at any given time.
 We take  the second model, i.e.  keep  the length of the window $\tau$, hence $t_1=\tau$ and  each $t_i=\tau +\lambda.(i-1)$ for $i>1 $ and $\lambda <\tau$ determines a window of length $\tau$ and a graph $G_i$ defined by the edges in the window or time interval $[t_i -\tau, t_i]$. 
 The number of edges in a window may  increase or decrease and reflects the increasing or decreasing rates of a stream. Consecutive windows overlap within a factor $\tau/\lambda$, about $50\%$ in the experiments. In practice, $\tau=60$ mins and $\lambda=30$ mins.
The graphs $G_{i+1}$ and
$G_{i}$ share many edges: old edges of $G_i$ are removed and new edges are added to  $G_{i+1}$. Social graphs have a specific structure, a specific degree distribution (power law), a small diameter and some dense clusters. The dynamic random graphs introduced in the next section satisfy these conditions.

There are several definitions of a cluster or community or dense subgraph. We consider a cluster of domain $S$ as a {\em maximal dense subgraph} which depends on a parameter $\gamma$. Let $\gamma \leq 1$ and let $E(S)$ be the multiset of internal edges i.e. edges $e=(u,v)$ where $u,v \in S$. A $\gamma$-cluster is maximal subset $S$ such that $E(S) \geq \gamma.|S|^2$.
There are several  other possible definitions of clusters which capture  the high internal global density, and there are many algorithms to detect such clusters in a static graph. We are mainly interested in the approximate detection of clusters in the dynamic case, without storing the whole graph.

\subsection{Detecting clusters in a stream of edges}

If the entire graph is known, there are several classical  techniques to approximate communities. For  $\gamma=1$, the problem is known as {\em Maxclique}, which  is NP-hard. In our framework, we do not store the entire graph but only a few edges, and will only approximate the communities. We take a uniform sampling of the edges\footnote{Notice that such a uniform sampling on the edges is equivalent to a sampling of the nodes proportionally to their degrees.} for each window of the stream $i$ which defines $G_i(t)$, and keep  $k$ samples in Reservoirs $R_i(t)$ for a  fixed size $k$. There are several  techniques to build such dynamic Reservoirs \cite{B02}.

The sampling method we propose is not new,  it is one of the sampling methods in  \cite{Ah13}. Its analysis for dynamic graphs which follow a power law degree distribution  is one of the central points of this paper. 
If a cluster is large, there will be a large connected component in the Reservoir as a witness. We ignore all the small connected components of the Reservoir and only store in a database the large ones.
In a typical experiment $k=400$ whereas we read $10^4$ edges in a window. We ignore all the components of size less than $10$, a threshold  value. 
For a graph $G_i$, there could be several large clusters $C_{i,j}$ or there could be none. For a stream $G_i(t)$, we write 
$C_{i,j} (t)$ for the $j$-th cluster of the stream $i$ at time $t$.

\subsection{Correlations}

The classical correlation, also called the {\em Pearson correlation} $p(X,Y)$ of  two random variables $X,Y$ of mean $\mu$ and standard deviation $\sigma$ is $\frac{\E[(X-\mu_X)(Y-\mu_Y)]}{\sigma_X.\sigma_Y}$. How could it be extended to graphs?  
One could choose some statistics on the graphs and take the correlations between the statistical parameters. Social graphs have however very similar statistics and yet  the core of the information seems to hide in the structure of their clusters. 
Given two graphs 
$G_1$ and 
$G_2$, a first approach to their content correlation would be to consider the Jaccard similarity\footnote{The Jaccard similarity or Index between two sets $A$ and $B$ is $J(A,B)=|A \cap B|/|A \cup B|$. The  Jaccard distance is $1-J(A,B)$.} $J(V_1,V_2)$  on the domains of the two graphs. It has several drawbacks: it is independent of the structures of the graphs, it is very sensitive to noise, nodes connected with one or few edges and it is not well adapted when the sizes are very different.  It also requires to store the entire graphs.

We propose instead the following approach: we first estimate the dense components (clusters or communities) using the uniform sampling on the edges in the sliding windows and  apply the Jaccard similarity only to these large dense components. 
It exploits the structure of the graphs, is insensitive to noise and adapts well when the graphs have different sizes.  Let $C_i=\bigcup_j C_{i,j}$ be the set of clusters of the graph 
$G_i$, for $i=1$ or $2$. The {\em correlation of two graphs}
$\rho =J(C_1,C_2)$.
This definition is $\rho(t_1)$ for the first window or for two static graphs. For two dynamic streams $G_1 (t)$ and 
$G_2(t) $which share a time scale, we generalize  $C_i$ to  $C_i(t)=\bigcup_{t'_\leq t}\bigcup_j C_{i,j}(t')$.  {\em The correlation of 
two graph streams $G_1 (t)$ and 
$G_2(t)$ is} $ \rho(t) =J(C_1(t),C_2(t)) $.  We can refine the correlation and define an {\em amortized correlation} $\rho_a(t)$, to give more importance  to the recent components.
In the next section, we  give an algorithmic solution for graphs presented as streams of edges which scales when we consider dynamic graphs. 

\subsection{What is stored over time}
At some discrete times $t_1, t_2,....$, we store the large connected components of the Reservoirs $R_t$. There could be none.  We 
use a NoSQL database,  with $4$  (Key, Values) tables where  
the key is always a tag (@x or \#y) and the Values store the clusters nodes. Notice that a stream is identified by a tag (or a set of tags) and  a cluster is also identified by a tag, its node of highest degree.
{\small 
\begin{itemize}
\item  {\em Stream(tag, list(cluster, timestamp))} is the table which provides the most recent clusters of a stream, 
\item {\em Cluster(tag, list(stream, timestamp, list(high-degree nodes), list(nodes,degree))))} is the table which provides the list of high-degree nodes and the list of nodes with their degree, in a given  cluster,  
\item  {\em Nodes(tag, list(stream, cluster, timestamp))} is  the table which provides for each node the list of streams, clusters and timestamps where the node appears, 
\item  {\em Correlation((tag1,tag2),  list(value,timestamp))} is  the table which provides for each pair of streams {\em (tag1,tag2)} the different correlation values $\rho(t)$.
\end{itemize}
}
\subsection{Other approaches}
There are many other approaches to detect clusters in streams of graphs edges. 
The {\em dynamic graphs algorithms}  community studies the compromise between update and query time in the worst case. The {\em graph streaming  approach}  \cite{M2014} emphasizes the space complexity  in the worst case and in particular for  the window model.  The {\em network sampling  approach} such as \cite{Ah13}  does consider the uniform sampling on the edges but  there is no analysis for the detection of clusters in dynamic graphs.  The detection of a {\em planted clique} is a classical problem  \cite{F2000}, hard when the clique size is for example $O(\sqrt(n)/2)$ in the worst case. The {\em graph mining community }  \cite{A10,A13,Z14}  studies the detection of clusters when the graphs are entirely known.

In our approach, we only consider classes of graphs which follow a  power law degree distribution, and study approximate algorithms for the detection of dynamic  $\gamma$-cliques in the window model using only small Reservoirs.  

\section{Deciding properties and correlations in dynamic random models}
We define a model of dynamic random graphs $G(t)$  which may or may not have clusters and follow a degree distribution using the Configuration Model.
Temporal Logic is a framework to decide temporal properties of  the dynamic graphs $G_t$.  Let $P$  be a graph property such as Connectivity, or the existence of a $\gamma$-cluster of size at least $10$. A typical temporal property is  $\Diamond ~P$ stating that there exists a $t$ such that $G_t \models P$ or  $\Box P$ stating that for all $t$,  $G_t \models P$. 
In this section, we show that the algorithmic approach guarantees a good approximation of the correlation $\rho(t)$ with high probability.
 
\subsection{Dynamic Random graphs}
The classical Erd\"{o}s-Renyi model $G(n,p)$ \cite{E60}, generates  random graphs with $n$ nodes and edges are taken independently with probability $p$ where $0<p<1$.  The degree distribution is close to a gaussian centered on $n.p$.  
 Most of the social graphs have a degree distribution ${\cal D}$ close to a power law, such as a Zipfian distribution distribution where $Prob[d=j]=c/j^2$, where $d$ is the degrre of a node. In this case, the maximum degree is  $d_{max}=O(\sqrt{n})$.  The Configuration Model  for ${\cal D}$ and  a graph with $n$ nodes enumerates each node $u$ with $d$ half-edges (stubs) and  takes a symmetric random matching $\pi$ between two stubs, for example with a uniform permutation such that 
$\pi(i) \neq i$. All the possible graphs are obtained with a distribution close to the uniform distribution.

A classical study is to find sufficient conditions so that a random graph has a {\em giant component}, i.e. of size $O(n)$ for a graph of size $n$. In  the Erd\"{o}s-Renyi model $G(n,p)$, it requires that $p>1/n$, and in the Configuration Model it requires that $\E[{\cal D}  ^2]-2\E[{\cal D}]>0$ as proved in \cite{MR98}, which is realized for the Zipfian distribution.
There is a phase transition for both models. 
There are several possible extensions to dynamic random graphs. In our model, the Dynamics is exogenous and at any time chooses between the Uniform and the Concentrated Dynamics.
\subsubsection{Uniform Dynamics}

we generalize the Configuration Model in a dynamic setting.
Remove $q \geq 2$ random edges, uniformly on the set of edges of $G$, freeing $2.q$  stubs. Generate a new uniform matching on these hubs to obtain $G'$. The distribution of random graphs stays uniform.

\subsubsection{Concentrated Dynamics}
a typical graph generated by the Uniform Dynamics is not likely to have a large cluster. 
The {\bf $S$-concentrated Dynamics} fixes some a subset $S$  among the nodes of high degree. Remove $q \geq 2$ edges, uniformly on the set of edges of $G$, freeing $2.q$  stubs, as before.  A stub is in $S$ if its origin or extremity is in $S$.  With probability $80\%$,  match the stubs in $S$ uniformly in $S$. With probability $20\%$,  match the stubs in $S$ uniformly in $V-S$. This dynamics  will concentrate edges in $S$ and will create a $\gamma$-cluster after a few iterations with high probability, assuming the degree distribution is a power law. The distribution of graphs with a $\gamma$-cluster stays also uniform.
\subsubsection{General Dynamics}

a general Dynamics is a function which chooses at any given time, one of the two strategies: either a Uniform Dynamics or some  $S$-concentrated Dynamics for a fixed $S$.
An example is the  {\bf Step Dynamics}: apply the Uniform Dynamics first, then switch to the $S$-dynamics for a time period $\Delta$, and switch back to the Uniform Dynamics. In our setting, the Dynamics depends on some external information, which we try to approximately recover. Notice that during the Uniform Dynamics phase, there are no large components and we store nothing. For the step phase, we store some components which will approximate $S$.
More complex strategies could involve several clusters $S_1$ and $S_2$ which  may or may not  intersect. 

\subsection{Deciding a static property: there is a large  $\gamma$-cluster}
Let $R$ be the Reservoir of size $k$  after we read $m$ edges $e_1, e_2,....e_m$.
In this simple case, we first fill the Reservoir with $e_1, e_2,....e_k$. For $i>k$, we decide to keep $e_i$ with probability $k/i$ and if we keep $e_i$, we remove one of the edges (with probability $1/k$) to make room for $e_i$. Each edge $e_i$ has then probability $k/m$ to be in the Reservoir, i.e. uniform.

The probabilistic space $\Omega$ is determined by the choices taken at every step by  the Reservoir sampling.  Consider a clique $S$ in the graph: its image in the Reservoir is the set $G_S$ of internal edges $e=(u,v)$ in the Reservoir, where $u,v \in S$. Each edge of the clique $S$ is selected with constant probability $k/m$, so we are in the case of  the Erd\"{o}s-Renyi model $G(n,p)$ where $n=|S|$ and $p=k/m$. We know that the phase transition occurs at $p=1/n$, i.e. there is a giant component if $p>1/n$ and the graph is connected if $p\geq \log n/n$.

In the case of a  $\gamma$-clusters $S$ associated with the $S$-concentrated Dynamics,  the phase transition occurs at $p=1/\gamma.n$.  Let $V_S$ be the set of nodes of the giant component $G_S$ whose nodes are in $S$.  As it is customary for approximate algorithms, we write $Prob_{\Omega}[ {\rm Condition} ] \geq 1-\delta$ to say that the Condition is true with high probability.

\begin{lemma}\label{bigc}
For $m$ large enough, there exists $\alpha=O(\log n)$ and $\delta$ such that if $|S| \geq m/\gamma.k$ in the concentrated Dynamics, then 
$Prob_{\Omega}[ |V_S| > \alpha ] \geq 1-\delta$.
\end{lemma}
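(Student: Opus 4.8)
The plan is to show that, restricted to the cluster $S$, the internal edges retained by the Reservoir form a random graph that is \emph{strictly} above the phase transition identified above, so that its giant component already has size $\Theta(|S|)$, which for $m$ large dwarfs any $\alpha=O(\log n)$. The work is in setting up the comparison with an Erd\"os--R\'enyi / configuration model; the giant-component statement is then classical.

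First I would replace the Reservoir process by a cleaner model. After $m$ edges the Reservoir is a uniform $k$-subset of the $m$ edges read, so each edge of the internal multiset $E(S)$ is retained with marginal probability $p=k/m$, and the retention indicators of distinct edges are negatively associated. Let $N$ be the number of edges of $E(S)$ kept; then $N$ is hypergeometric with $\E[N]=|E(S)|\cdot k/m\ge \gamma|S|^2\cdot k/m$, and a Chernoff-type tail bound for the hypergeometric distribution gives $N\ge(1-\eps)\E[N]$ except with probability exponentially small in $|S|$. Conditioned on $N$, the retained internal edges are a uniform $N$-subset of $E(S)$, i.e. a $G(n,N)$-style random graph on the vertex set $S$ with $n=|S|$.

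Next I would read off the density and invoke the phase transition. The hypothesis $|S|\ge m/\gamma.k$ is precisely $\gamma|S|\cdot(k/m)\ge 1$, hence $\gamma|S|^2\cdot(k/m)\ge |S|$, so $\E[N]\ge|S|=n$ and, on the likely event above, $N\ge(1-\eps)n$: the sampled internal graph has average degree $2N/n\ge 2(1-\eps)>1$ and is strictly supercritical (above the threshold $p=1/\gamma.n$ stated above). A short verification shows multi-edges are harmless: since the $S$-concentrated Dynamics matches the internal stubs of $S$ by an essentially uniform matching and the power-law degrees satisfy $d_{max}=O(\sqrt n)$, any pair of nodes carries $O(1)$ parallel edges, and because $p=k/m\ll 1/\gamma$ the sampled edges land on distinct pairs with probability $1-o(1)$; so up to a $1-o(1)$ factor the graph is a simple supercritical $G(n,M)$. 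The classical supercritical giant-component theorem — breadth-first exploration from a vertex stochastically dominates a supercritical branching process (in the spirit of the Molloy--Reed criterion), hence survives with constant probability, and a second-moment / sprinkling argument upgrades this to ``whp a unique component of linear size'' — then yields constants $\beta,c>0$ such that there is a component on at least $\beta n$ vertices, except with probability $e^{-cn}$.

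Every vertex of that component lies in $S$, since only internal edges of $S$ were used, so it is contained in $G_S$ and $|V_S|\ge\beta n\ge\beta\,m/\gamma.k$. Collecting the two failure events gives $\delta=e^{-\Omega(|S|)}=e^{-\Omega(m/\gamma.k)}$, and taking $\alpha=O(\log n)$ the bound $|V_S|>\alpha$ holds as soon as $\beta|S|>\alpha$, i.e. for $m$ large enough. The main obstacle is the first reduction: justifying that the dependence coming from the fixed Reservoir size $k$ (only $k$ edges retained overall) and from the multigraph structure of $E(S)$ does not invalidate the comparison with an independent-edge Erd\"os--R\'enyi model; once negative association plus the hypergeometric concentration and the ``essentially uniform matching inside $S$'' are in hand, the rest is a routine application of the phase transition, with the $O(\log n)$ threshold leaving a comfortable margin.
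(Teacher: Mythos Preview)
Your proposal is correct and follows essentially the same route as the paper: observe that Reservoir sampling retains each internal edge of $S$ with probability $p=k/m$, note that the hypothesis $|S|\ge m/\gamma.k$ places the induced random graph strictly above the Erd\"os--R\'enyi phase transition $p=1/\gamma.|S|$, and conclude that the giant component has size $\Theta(|S|)\gg\alpha=O(\log n)$. The paper's own proof is a three-sentence sketch that invokes the phase transition directly; you have simply made rigorous the steps it leaves implicit (negative association / hypergeometric concentration for the Reservoir, handling of multi-edges, and the supercritical branching-process argument).
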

\begin{proof}
If $S$ is almost a clique, i.e. a $\gamma$-cluster, then the phase transition occurs at 
$p=1/\gamma.|S|$. Hence if $p > 1/\gamma.|S|$, there is a giant component of size larger than a constant times $|S|$, say $|S|/2$ with high probability $1-\delta$. As the probability of the edges is $k/m$, it occurs if  $|S| \geq m/\gamma.k$.  Hence for $m$ large enough, there exists  $\alpha=O(\log n)$ such that  $Prob_{\Omega}[ |V_S| >  \alpha ] \geq 1-\delta$.
\end{proof}

In order to decide the graph property $P$: {\em there is a large $\gamma$-cluster}, consider this simple algorithm.

{\bf Static Cluster detection  Algorithm 1}: let $C$ be the  largest connected component of the Reservoir $R$. If $|C| \geq  \alpha$ then Accept, else Reject.

\begin{theorem}\label{t1}
If $|S| \geq m/\gamma.k$  for the concentrated Dynamics, then:
 
$Prob_{\Omega}[ {\rm Algorithm ~1 ~Accepts} ] \geq 1-\delta$

and for the uniform Dynamics:

$Prob_{\Omega}[ {\rm Algorithm ~1 ~Rejects} ] \geq 1-\delta$.
\end{theorem}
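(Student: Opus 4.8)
The plan is to split the statement into its two halves and treat each by reducing to the phase‑transition fact for $G(n,p)$ already exploited in Lemma~\ref{bigc}. For the first half, assume the $S$‑concentrated Dynamics with $|S| \geq m/\gamma.k$. By Lemma~\ref{bigc} there is a constant $\alpha = O(\log n)$ and a $\delta$ with $Prob_{\Omega}[ |V_S| > \alpha ] \geq 1-\delta$. Since $V_S$ is contained in a single connected component of the Reservoir $R$ (namely the giant component $G_S$ restricted to $S$), the largest connected component $C$ of $R$ satisfies $|C| \geq |V_S| > \alpha$ on that event, so Algorithm~1 accepts with probability at least $1-\delta$. The only subtlety is making sure the threshold $\alpha$ used by the algorithm is exactly the constant produced by the lemma; I would simply fix $\alpha$ once and for all as that value, which is legitimate since it depends only on $n$ and the distribution, not on the run.

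For the second half, assume the Uniform Dynamics. Here the graph in the window has no cluster, and I want to argue that \emph{every} connected component of the Reservoir is small, i.e. of size at most $\alpha$, with high probability. The Reservoir is a uniform $k$-edge sample of a Configuration‑Model graph whose degree distribution is a power law; equivalently each edge survives with probability $p = k/m$. Restricted to any candidate vertex set of size $s$, the induced edges behave like $G(s,p)$, and since there is no dense region, $s \cdot p$ stays below the critical threshold everywhere, so all components are $O(\log n)$. I would make this precise by a union bound / first‑moment argument: the expected number of vertex subsets of size $> \alpha$ that span a connected subgraph in the sampled graph is $o(1)$ once $\alpha = c\log n$ for a suitable constant $c$, using that the branching‑process approximation is subcritical (this is where the power‑law assumption and $\E[{\cal D}^2] - 2\E[{\cal D}]$ being the relevant quantity enter, though in the Uniform Dynamics with the sampling factor $p=k/m$ the effective parameter is subcritical). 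Then $|C| < \alpha$ and Algorithm~1 rejects with probability $\geq 1-\delta$.

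The main obstacle is the rejection direction: Lemma~\ref{bigc} only gives the \emph{existence} of a giant component in the accept case, but says nothing about the \emph{absence} of large components in the uniform case, so that half genuinely needs the subcritical‑component tail bound sketched above rather than a direct appeal to the lemma. One must also be slightly careful that "no large component" should hold uniformly over the whole Reservoir, not just over a fixed vertex set, which is exactly what the union bound over subsets of size $\alpha$ delivers; and one may need to shrink $\delta$ or enlarge the constant in $\alpha$ to absorb both the accept‑side and reject‑side failure probabilities into a single $\delta$. Apart from these points the argument is a routine packaging of the Erd\H{o}s–Rényi phase transition, which the paper has already invoked.
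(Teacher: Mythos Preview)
Your accept half is exactly the paper's argument: invoke Lemma~\ref{bigc} to get $|V_S|>\alpha$ with probability $\geq 1-\delta$, note $V_S\subseteq C$, and conclude.

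For the reject half the paper does not carry out a first--moment computation; it simply observes that under the Uniform Dynamics the edge--sampled graph is (a thinning of) a Configuration--Model graph and cites Molloy--Reed \cite{MR98} for the fact that in the subcritical regime the largest component has size $O(\log n)$, whence $|C|<\alpha$ and Algorithm~1 rejects. Your proposal to prove this directly by a union bound over connected subsets is in the same spirit and would also work, but one caution: the intermediate heuristic ``restricted to any candidate vertex set of size $s$, the induced edges behave like $G(s,p)$'' is not literally correct in a power--law Configuration Model, since high--degree vertices induce a denser subgraph than $G(s,p)$ would predict; a naive $G(n,p)$ first--moment count can fail here. The branching--process/Molloy--Reed criterion you mention at the end (checking that the thinned degree sequence satisfies $\E[{\cal D}'^2]-2\E[{\cal D}']<0$ when $p=k/m$) is the right way to make it precise, and that is effectively what the paper's citation is invoking. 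So the two routes coincide once you drop the $G(s,p)$ shortcut; the paper's version is just the black--box form of yours.
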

\begin{proof}
 If $|S| \geq m/\gamma.k$ for the concentrated Dynamics, Lemma \ref{bigc} states that
 $|V_S| > \alpha$ with high probability, hence as $V_S \subseteq C$, the condition $|C| \geq \alpha$ is true with high probability hence $Prob_{\Omega}[ {\rm Algorithm ~1 ~Accepts} ] \geq 1-\delta$. For the Uniform Dynamics ($|S|=0$),  \cite{MR98} shows that the largest connected component has size $O(\log n)$.   Hence $Prob_{\Omega}[ {\rm Algorithm ~1 ~Rejects} ] \geq 1-\delta$.
\end{proof}

Notice that $m=c_1.n.\log n$, as the average degree in a power law is $c_1.\log n$.  If $k= \sqrt{c_1.n}.\log n $ and $|S| \geq m/\gamma.k=\sqrt{c_1.n}/\gamma$, it satisfies the condition and it can be realized with the nodes of high degree.

\subsection{Deciding a dynamic property: $\Diamond ~P$}
Let $P$ be the previous property: is there a $\gamma$-cluster? How do we decide $\Diamond ~P$?
Consider the step strategy of length $\Delta>\tau$. When we switch strategy at time $t_1$ there is a delay until $S$ is a $\gamma$-cluster and symmetrically the same delay when we switch again at time $t_2 >t_1$. The probabilistic space $\Omega_t$ is now much larger.

{\bf Dynamic Cluster detection  Algorithm 2}: let $C_i$ be the  largest connected component of a dynamic  Reservoir $R_i$ at time $t_i$. If  there is an $i$ such that $|C_i| \geq \alpha$, then  Accept, else Reject.

We can still distinguish between the Uniform and the $S$-concentrated Dynamics, if $S$ is large enough. Let $m(t)$  be the number of edges in the window at time $t$. Let $G(t)$ be a graph defined by a stream  of $m(t)$ edges following a power law  ${\cal D}$.

\begin{theorem}\label{t2}
 For the step Dynamics of length $\Delta$ and $t>t_2$,    if $|S| \geq m(t)/\gamma.k$, then 
$Prob_{\Omega}[ A_2 {\rm ~Accepts} ] \geq 1-\delta^{\Delta /\tau}$
For the Uniform Dynamics
$Prob_{\Omega}[ A_2 {\rm  ~Rejects} ] \geq (1-\delta) ^{\Delta /\lambda}$.
\end{theorem}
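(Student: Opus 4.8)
The plan is to reduce the dynamic statement to repeated applications of Theorem~\ref{t1} over a carefully chosen family of windows, and then to combine the per-window guarantees. First I would make the time structure explicit. In the step Dynamics of length $\Delta$, after the switch at $t_1$ there is a bounded \emph{mixing delay} $\Delta_0$ — the "few iterations" after which the $S$-concentrated Dynamics has produced a $\gamma$-cluster with high probability, exactly as invoked in Lemma~\ref{bigc} — and a symmetric delay near the switch back at $t_2$. So the sub-interval $[t_1+\Delta_0,\ t_2-\Delta_0]$, of length $\Delta - O(\Delta_0) = \Theta(\Delta)$ since $\Delta > \tau$, is a phase during which $S$ genuinely is a $\gamma$-cluster in $G(t)$. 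Inside this phase I would retain only the windows $[t_i-\tau,\ t_i]$ that are entirely contained in it, and among those extract one every $\tau$ time units, obtaining roughly $\Delta/\tau$ \emph{pairwise disjoint} good windows.

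Second, for each good window $i$ the edges read in that window are generated by the $S$-concentrated Dynamics, and by hypothesis $|S| \ge m(t)/\gamma k \ge m_i/\gamma k$, where $m_i$ is the number of edges in window $i$; the rate only fluctuates within a constant factor, so $m_i \le m(t)$ up to that factor, which is absorbed into the constants of Lemma~\ref{bigc}. Hence Theorem~\ref{t1} applied to window $i$ gives $Prob_{\Omega}[\,|C_i| \ge \alpha\,] \ge 1-\delta$, i.e. a per-window failure probability at most $\delta$. Because the chosen windows are disjoint, they read disjoint portions of the stream and, with fresh Reservoir randomness per window, the failure events become independent, so the probability that \emph{all} of them fail is at most $\delta^{\Delta/\tau}$ (up to the floor). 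Since $A_2$ accepts as soon as one window has $|C_i|\ge\alpha$, this yields $Prob_{\Omega}[A_2 \text{ Accepts}] \ge 1-\delta^{\Delta/\tau}$.

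Third, for the Uniform Dynamics direction I would argue symmetrically: over the relevant interval there are about $\Delta/\lambda$ discrete times $t_i$, and at each one the underlying graph carries no planted cluster, so by the bound of \cite{MR98} (already used in Theorem~\ref{t1}) its largest connected component has size $O(\log n)$, below the threshold $\alpha$, except with probability $\delta$. Thus each window rejects with probability $\ge 1-\delta$, and $A_2$ rejects iff every window does; treating the windows as essentially independent gives $Prob_{\Omega}[A_2 \text{ Rejects}] \ge (1-\delta)^{\Delta/\lambda}$.

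The main obstacle I expect is the dependence between windows: consecutive windows overlap by a factor $\tau/\lambda$, so their Reservoirs — especially under a genuinely sliding Reservoir — are far from independent, and the clean product bounds $\delta^{\Delta/\tau}$ and $(1-\delta)^{\Delta/\lambda}$ require either restricting to disjoint windows with fresh randomness (which is why the acceptance exponent is $\Delta/\tau$ rather than $\Delta/\lambda$) or a finer correlation argument; a plain union bound would only give the weaker $1-(\Delta/\tau)\delta$ and $1-(\Delta/\lambda)\delta$. A secondary point to pin down is that the mixing delay $\Delta_0$ is genuinely $O(1)$ windows, independent of $n$, so that $\Theta(\Delta/\tau)$ good windows really survive after discarding the transients at both switch times — which is precisely where the hypothesis $\Delta > \tau$ (in fact $\Delta \gg \tau$) is needed.
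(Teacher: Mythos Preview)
Your proposal is correct and follows essentially the same route as the paper: apply Theorem~\ref{t1} window by window, then combine the per-window guarantees via independence, using $\Delta/\tau$ disjoint windows for the acceptance bound and all $\Delta/\lambda$ steps for the rejection bound. The paper's own proof is in fact considerably terser than yours --- it simply asserts that there are $\Delta/\tau$ independent windows and multiplies, without discussing the mixing delay, the disjointness needed for independence, or the overlap issue you raise; your treatment of these points is more careful than what the paper provides.
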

\begin{proof}
For each window, we can apply theorem \ref{t1} and there are $\Delta/\tau$ independent windows.
 If $|S| \geq m/\gamma.k$ for the concentrated Dynamics, the error probability is smaller than the error made for 
 $\Delta/\tau$ independent windows, which is $\delta^{\Delta/\tau}$. Hence 
  $Prob_{\Omega}[ {\rm Algorithm ~2 ~Accepts} ] \geq 1-\delta^{\Delta/\lambda}$. For the Uniform Dynamics (equivalent to $|S|=0$),  the algorithm needs to be correct at each $\Delta/\lambda$ step. Hence
    $Prob_{\Omega}[ {\rm Algorithm ~2 ~Rejects} ] \geq (1-\delta) ^{\Delta/\lambda}$.
\end{proof}

The probability to accept for the $S$ concentrated Dynamics is amplified whereas the probability to reject for the Uniform Dynamics decreases. One single error  generates a global error.
Clearly, we could also estimate $\Delta$, for step strategies with similar techniques.

\subsection{Correlation between two streams}\label{correlation}

Suppose we have two streams $G_1 (t)$ and 
$G_2(t) $ which share the same clock. Suppose that  $G_1 (t)$ is a step strategy $\Delta_1$ on a cluster $S_1$  and  $G_2 (t)$ is a step strategy $\Delta_2$ on a cluster $S_2$. Let $\rho^*=J(S_1,S_2)$.   How good is the estimation of their correlation? Let  $C_i(t)=\bigcup_{t' \leq t}\bigcup_j C_{i,j}(t')$ be the set of large clusters $C_{i,j}(t')$ at time $t' \leq t$ of the graph 
$G_i$, for $i=1$ or $2$. Consider the following online algorithm to compute $\rho(t)$:

{\bf Online  Algorithm 3 for $\rho(t)$.} At time  $t+\lambda$, compute the increase $\delta_i$ in size of $C_i(t+\lambda)$ for $ i=1,2$ from $C_i(t)$,    and $\delta'$ the increase in size of $C_1(t+\lambda)\cap C_2(t+\lambda)$. Suppose $\rho(t)=I/U$ where $I=|C_1(t)\cap C_2(t)|$ and $U=|C_1(t)\cup C_2(t)|$. Then:
$\rho(t+\lambda)= \rho(t)+\frac{U. \delta' -I.(\delta_1+\delta_2)}{U.(U+\delta_1+\delta_2)}$.

 A simple computation shows that $\rho(t+\lambda)= \frac{I+\delta' }{U+\delta_1+\delta_2}$, i.e. the correct definition. The $\delta_i, \delta'$ are computed by standard operations on sets.
\begin{figure}[!t]
  \centering
   \includegraphics[width=0.8\linewidth]{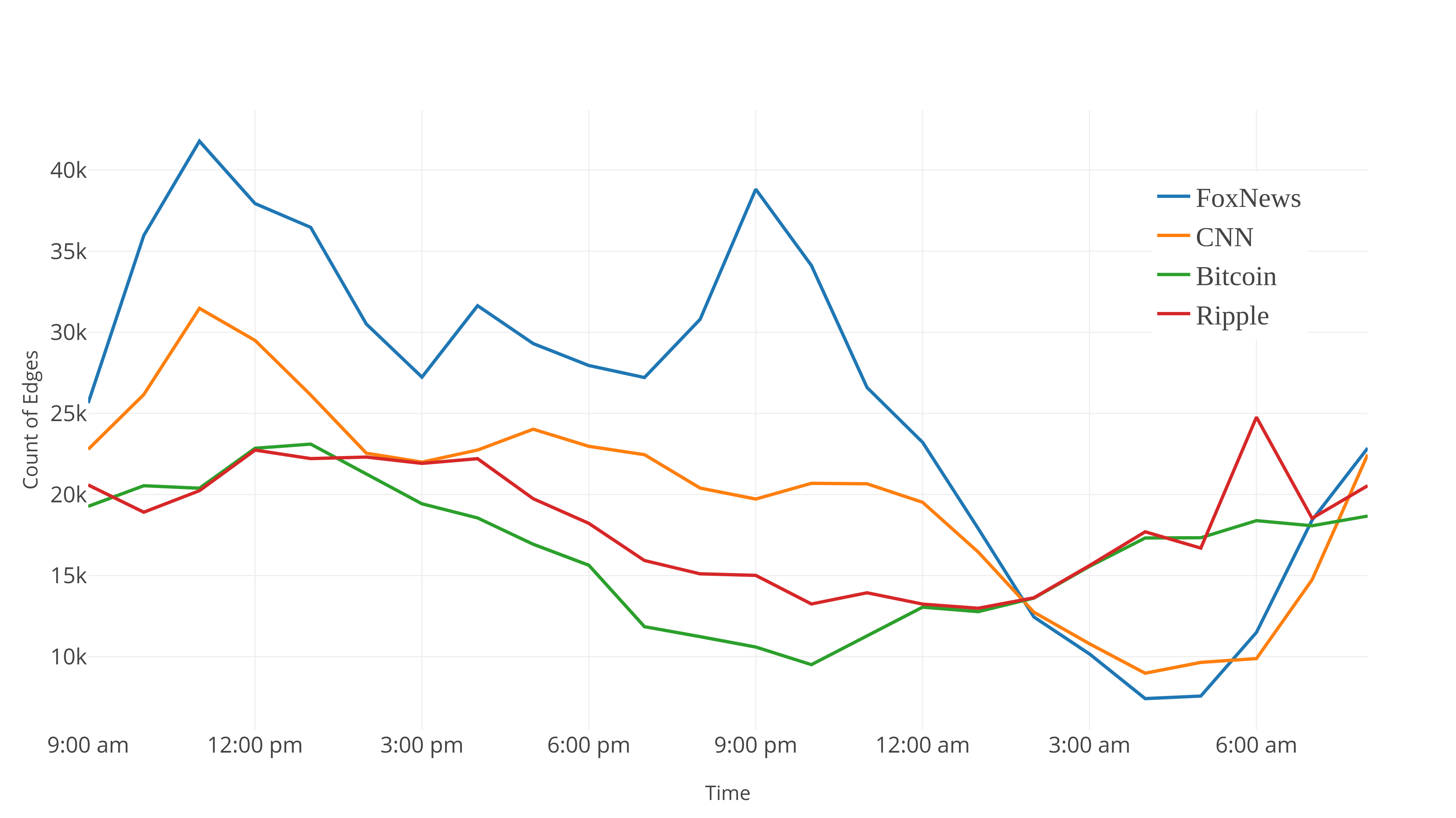}
  \caption{Number of edges  in $1h$ windows, for $4$ streams during $24h$\label{numb}}
\end{figure}
\begin{figure}[!t]
  \centering
   \includegraphics[width=0.8\linewidth]{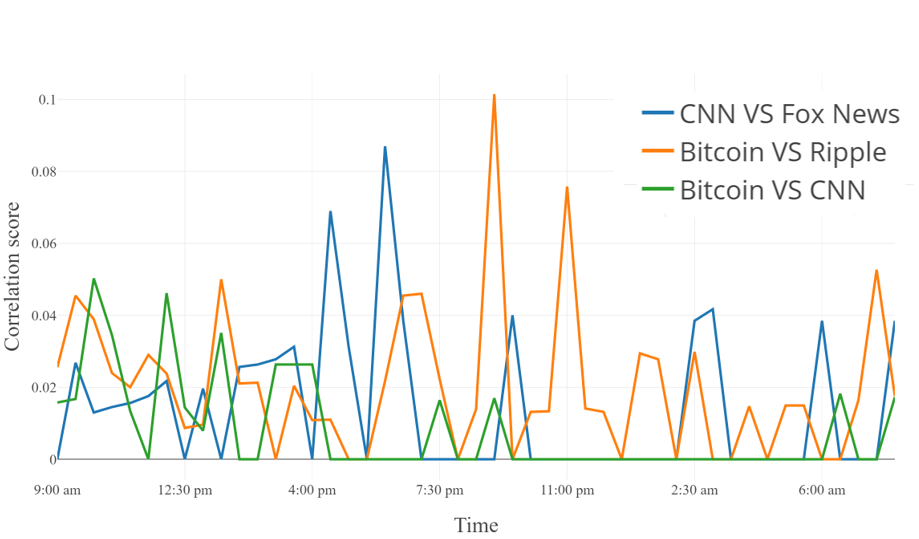}
  \caption{Online content  correlation for $24$h \label{cor1}}
\end{figure}
\begin{figure}[!t]
  \centering
   \includegraphics[width=0.8\linewidth]{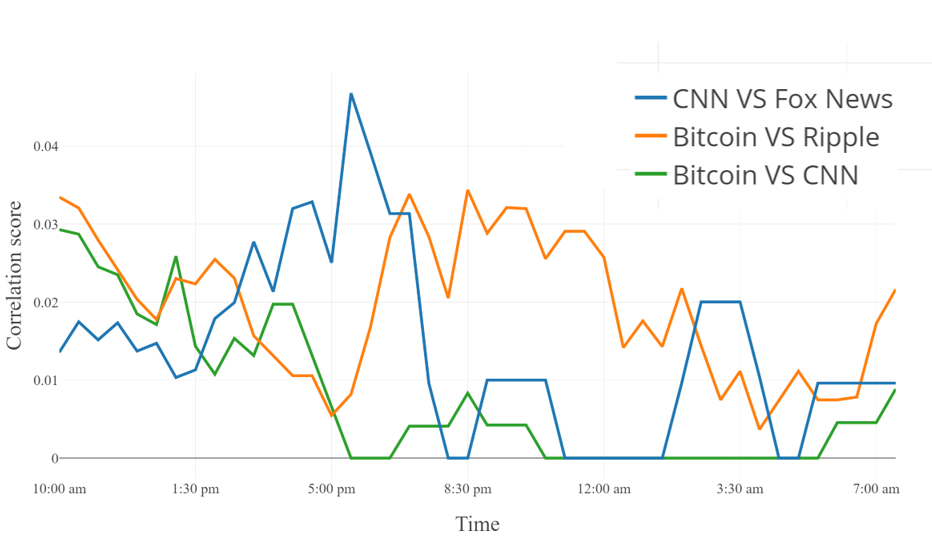}
  \caption{Online averaged content correlation for $24$h \label{cor2}}
\end{figure}
\begin{theorem}\label{t3}
Let $G_1 (t)$ and $G_2 (t)$ be two  step  strategies  before time $t$ on
two clusters such that $|S_i| \geq m/\gamma.k$ for $i=1,2$.
Then 
$Prob_{\Omega_t}[ |\rho(t)-\rho^*| \leq \eps ] \geq 1-\delta$.
\end{theorem}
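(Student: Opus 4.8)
The plan is to reduce the statement to two independent facts: that the accumulated cluster sets $C_i(t)$ that Algorithm~3 manipulates converge, up to a small symmetric difference, to the true cluster domains $S_i$; and that the Jaccard index is stable under such small perturbations. The text already verifies that the incremental update of Algorithm~3 keeps $\rho(t)=J(C_1(t),C_2(t))$ exactly, so it is enough to show that $J(C_1(t),C_2(t))$ is $\eps$-close to $\rho^*=J(S_1,S_2)$ with probability $\geq 1-\delta$.

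First I would fix one stream, say $G_1(t)$, and look at a single window of its $S_1$-concentrated phase. By Lemma~\ref{bigc} and the argument of Theorem~\ref{t1}, once $|S_1|\geq m/\gamma k$ the image of $S_1$ in the Reservoir is a supercritical Erd\"{o}s-Renyi graph with $p=k/m>1/\gamma|S_1|$, so it has a giant component whose vertex set $V_{S_1}$ is contained in the stored largest component $C$, and each node of $S_1$ lies in $V_{S_1}$ with probability at least a constant $\beta=\beta(\gamma,k,m)>0$ (the survival probability of the associated branching process). Non-overlapping windows of the step phase use independent Reservoir draws and independent re-matchings of the Concentrated Dynamics, exactly the $\Theta(\Delta/\tau)$ independent windows already used in Theorem~\ref{t2}.

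Next I would accumulate over windows. The hypothesis that $G_1(t)$ is a step strategy \emph{before} time $t$ is read, as in the earlier theorems, as ``$t$ (equivalently $\Delta$) large enough'': we need $\Delta/\tau=\Omega(\log(|S_1|/\delta))$, giving $w=\Omega(\log(|S_1|/\delta))$ independent windows. A node of $S_1$ is missing from $C_1(t)=\bigcup_{t'\le t}\bigcup_j C_{1,j}(t')$ only if it escaped every giant component, so $\E[\,|S_1\setminus C_1(t)|\,]\le|S_1|(1-\beta)^w$, and Markov's inequality gives $|S_1\setminus C_1(t)|\le\eps'|S_1|$ with probability $\ge 1-\delta/4$, for $\eps'$ chosen below; the same holds for $S_2$, and a union bound handles both streams simultaneously. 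One must also control the reverse inclusion, i.e. bound $|C_i(t)\setminus S_i|$: the only nodes entering $C_i(t)$ from outside $S_i$ come from the $20\%$ ``leakage'' matching of the Concentrated Dynamics, and one argues these add at most $\eps'|S_i|$ vertices to the stored component — for instance because they attach to $V\setminus S_i$ with small Reservoir-degree and fail the size threshold, or by restricting each $C_{i,j}$ to its $\Theta(\sqrt n)$ highest-degree vertices. Together this yields $|C_i(t)\,\triangle\,S_i|\le 2\eps'|S_i|$ for $i=1,2$ with probability $\ge 1-\delta$.

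To finish I would invoke the elementary stability of the Jaccard index: if $|A\,\triangle\,\hat A|$ and $|B\,\triangle\,\hat B|$ are each at most an $\eta$-fraction of $|A\cup B|$, then $|J(A,B)-J(\hat A,\hat B)|=O(\eta)$, a short computation from $J=|A\cap B|/|A\cup B|$. Choosing $\eps'$ a small enough constant multiple of $\eps$ (scaled by $\rho^*$ to absorb the case $|S_1|\ll|S_2|$) turns the symmetric-difference bound into $|\rho(t)-\rho^*|=|J(C_1(t),C_2(t))-J(S_1,S_2)|\le\eps$, which is the claim. The main obstacle is the reverse-inclusion bound on $|C_i(t)\setminus S_i|$: the coverage direction drops out cleanly from Theorem~\ref{t2} and independence of windows, but showing that the leakage edges, accumulated over $w=\Omega(\log n)$ windows, do not inflate the stored clusters with spurious low-degree vertices is the step that genuinely uses the power-law degree sequence and the size threshold, and it also sets how large $t$ must be.
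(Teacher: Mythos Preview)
Your proposal is correct and follows the same route as the paper: invoke Lemma~\ref{bigc} on each window of the concentrated phase, accumulate the giant components over the $\Delta/\tau$ independent windows so that $\bigcup_i V_{S_{1,i}}$ approximates $S_1$ (and likewise $S_2$), and conclude that the Jaccard index of the approximations is close to $\rho^*$. Your treatment is considerably more careful than the paper's own proof, which is a two-line sketch that merely asserts the accumulated components ``will be a good approximation of $S_i$''; the reverse-inclusion (leakage) bound and the explicit Jaccard stability step you isolate are not spelled out there.
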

\begin{proof}
After the first observed step, for example  on $S_1$, Lemma \ref{bigc} indicates that $V_{S_1}$ is already some approximation
of $S_1$. After $\Delta_1/\tau$ independent trials, $V_1=\bigcup_i V_{S_{1,i}}$ will be a good approximation of $S_1$. Similarly for $S_2$ and therefore $\rho(t)=J(V_1,V_2)$ will $(\eps,\delta)$ approximate $\rho^*$.
\end{proof}

\section{Twitter streams}
Given a set of tags such as \#CNN or \#Bitcoin, Twitter provides
 a stream of tweets represented as Json trees whose content  contains at least one of these tags. The {\em Twitter Graph} of the stream, is the graph $G=(V,E)$ with multiple edges $E$
where $V$ is the set of  tags $\#x$ or $@y$  seen and  for each tweet sent by  $@y$  which contains tags
$\#x$ ,$@z$ we construct the
edges  $(@y,\#x)$ and  $(@y,@z)$   in $E$.  The URL's which appear in the tweet can also be considered as nodes but we ignore them for simplicity. A stream of tweets is then transformed into a stream of edges $e_1,......e_m,....$.

We simultaneously captured $4$ twitter streams\footnote{Using a  program available on 
https://github.com/twitterUP2/stream  which takes some tags, a Reservoir size $k$, a window size $\tau$, a step  $\lambda$ and saves the large connected components of the Reservoirs $R_t$.}  on the tags 
\#CNN,  \#FoxNews, \#Bitcoin, and \#Xrp (Ripple) during 24 hours with a 
window size of $\tau=1h$ and a time interval $\lambda=30$mins, using a standard PC. Figure \ref{numb} indicates the number of edges seen in a window, approximately $m=20.10^3$ per stream, on $48$ points. For $24$ independent windows, we read 
approximately $48.10^4$ edges, and globally approximately $2.10^6$ edges. The Reservoirs size $k=400$ and on the average we save $100$ nodes and edges, i.e. $4.48.100\simeq 2.10^4$ edges, i.e. a compression of $100$. For $\gamma=0.8$, the minimum size of a cluster is  $m/\gamma.k \simeq 60$. Notice that $k$ is close to $\sqrt{m}$.

Figure \ref{cor1} gives the three mains correlations  $\rho(t)$ out of the possible $6$ and the averaged correlation: 
$$\rho'(t)=(\rho(t-1)+\rho(t)+\rho(t+1))/3$$

 The correlation is highly discontinuous, as it can be expected, but the averaged version is smooth. The maximum value is $1\%$ for the correlation and $0.5\%$ for the averaged version. It is always small as witnessed by the correlation matrix. We experienced very small changes in the correlations and $\rho'(t)$, also computed online, witnessed it.  The spectrum of the Reservoirs, i.e. the sizes of the large connected components is another interesting indicator. For the \#Bitcoin stream, there is a unique very large component.

\section{Search by correlation}
We stored the history of the large clusters for each stream, i.e. the set of nodes of the clusters. Given a search query defined by a set of tags,  the answer to the query is the set of the most correlated tags. We first need a definition of the correlation between a tag and a set of tags.   Given a stream, we need to find some other close streams and use the standard Phylogeny method.
\subsection{Phylogeny}
Given a correlation matrix 
$A_t$ between streams, a standard approach, such as the Neighbor Joining method 
\cite{NJ},   constructs a tree  $T$ with valued 
edges  
such that each stream appears as a leaf in $T$ and the distance $d(i,j)$ between two streams in the tree is approximately the distance defined by the correlation matrix, i.e.  $d(i,j) \simeq 1-A(i,j)$. This construction assumes an additive property of the distances, but there is always an approximate solution.

In a learning phase ended at time $t$, we construct the tree $T$. Later on,  we have a different matrix $A'$ and a different tree $T'$ as in the  Figure  \ref{phyl}.  The {\em Tree Edit distance with moves} is a standard distance between trees where the basic operations are: edition of a label, insertion/deletion of an edge, move of a subtree.
This distance is very easy to approximate
\cite{fmr2010}, although the exact distance is $NP$-hard to compute. We just have to compare the  $k$-grams of the subtrees at depth 
$k$.   For  the unordered trees of Figure  \ref{phyl}, $\dist(T,T')=2$. 

We can  hence easily detect small or large changes in the tree $T$. Given a stream, the neighbors of a leaf in $T$ are the closest streams in $T$, which we use in the Search by correlation.
\begin{figure}[ht]
  \centering
   \includegraphics[width=1.02\linewidth]{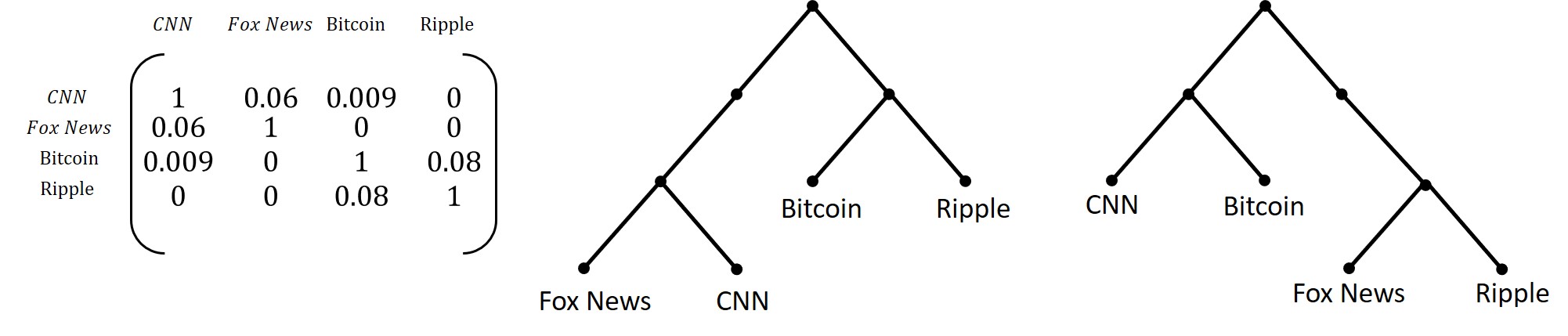}
  \caption{Phylogeny tree $T$ (center) from the correlation matrix, and another close tree $T'$ (right)\label{phyl}}
\end{figure}

We have a dynamic representation of the distances between streams, which we use in the next section.

\begin{center}
\begin{table*}[!t]
\renewcommand{\arraystretch}{1.3}
\centering
\resizebox{\textwidth}{!}{
\begin{tabular}{cccllcccllcc}
\cline{1-3} \cline{6-8} \cline{11-12}
\multicolumn{3}{|c|}{\textbf{Input: CNN}} &  & \multicolumn{1}{l|}{} & \multicolumn{3}{c|}{\textbf{Input: CNN POTUS}} &  & \multicolumn{1}{l|}{} & \multicolumn{2}{c|}{\textbf{Input: CNN POTUS NRA}} \\ \cline{1-3} \cline{6-8} \cline{11-12} 
\multicolumn{1}{|c|}{Ranking} & \multicolumn{1}{c|}{t =12} & \multicolumn{1}{c|}{t = 24} &  & \multicolumn{1}{l|}{} & \multicolumn{1}{l|}{Ranking} & \multicolumn{1}{c|}{t = 12} & \multicolumn{1}{c|}{t  = 24} &  & \multicolumn{1}{l|}{} & \multicolumn{1}{c|}{Ranking} & \multicolumn{1}{c|}{t = 12} \\ \cline{1-3} \cline{6-8} \cline{11-12} 
\multicolumn{1}{|c|}{1} & \multicolumn{1}{c|}{\#ODonnell} & \multicolumn{1}{c|}{\#POTUS} &  & \multicolumn{1}{l|}{} & \multicolumn{1}{c|}{1} & \multicolumn{1}{c|}{\#MondayMotivaton} & \multicolumn{1}{c|}{\#Tucker} &  & \multicolumn{1}{l|}{} & \multicolumn{1}{c|}{1} & \multicolumn{1}{c|}{\#Hannity} \\ \cline{1-3} \cline{6-8} \cline{11-12} 
\multicolumn{1}{|c|}{2} & \multicolumn{1}{c|}{\#NRA} & \multicolumn{1}{c|}{\#NRA} &  & \multicolumn{1}{l|}{} & \multicolumn{1}{c|}{2} & \multicolumn{1}{c|}{\#Hannity} & \multicolumn{1}{c|}{\#2A} &  & \multicolumn{1}{l|}{} & \multicolumn{1}{c|}{2} & \multicolumn{1}{c|}{\#2A} \\ \cline{1-3} \cline{6-8} \cline{11-12} 
\multicolumn{1}{|c|}{3} & \multicolumn{1}{c|}{\#POTUS} & \multicolumn{1}{c|}{\#ODonnell} &  & \multicolumn{1}{l|}{} & \multicolumn{1}{c|}{3} & \multicolumn{1}{c|}{\#NRA} & \multicolumn{1}{c|}{\#NRA} &  & \multicolumn{1}{l|}{} & \multicolumn{1}{c|}{3} & \multicolumn{1}{c|}{\#1A} \\ \cline{1-3} \cline{6-8} \cline{11-12} 
\multicolumn{1}{|c|}{4} & \multicolumn{1}{c|}{\#Obamacare} & \multicolumn{1}{c|}{\#Tucker} &  & \multicolumn{1}{l|}{} & \multicolumn{1}{c|}{4} & \multicolumn{1}{c|}{\#1A} & \multicolumn{1}{c|}{\#1A} &  & \multicolumn{1}{l|}{} & \multicolumn{1}{c|}{4} & \multicolumn{1}{c|}{\#Clinton} \\ \cline{1-3} \cline{6-8} \cline{11-12} 
\multicolumn{1}{|c|}{5} & \multicolumn{1}{c|}{\#MondayMotivaton} & \multicolumn{1}{c|}{\#2A} &  & \multicolumn{1}{l|}{} & \multicolumn{1}{c|}{5} & \multicolumn{1}{c|}{\#Tucker} & \multicolumn{1}{c|}{\#Hannity} &  & \multicolumn{1}{l|}{} & \multicolumn{1}{c|}{5} & \multicolumn{1}{c|}{\#ClintonFoundation} \\ \cline{1-3} \cline{6-8} \cline{11-12} 
\multicolumn{1}{l}{} & \multicolumn{1}{l}{} & \multicolumn{1}{l}{} &  &  & \multicolumn{1}{l}{} & \multicolumn{1}{l}{} & \multicolumn{1}{l}{} &  &  & \multicolumn{1}{l}{} & \multicolumn{1}{l}{}
\end{tabular}
}
\caption{Search results on inputs: $\sigma$=CNN,  $\sigma$=CNN, POTUS and  $\sigma$=CNN, POTUS, NRA }
\label{my-label}
\end{table*}
\end{center}

\begin{figure*}[!t]
  \centering
   \includegraphics[width=0.6\linewidth]{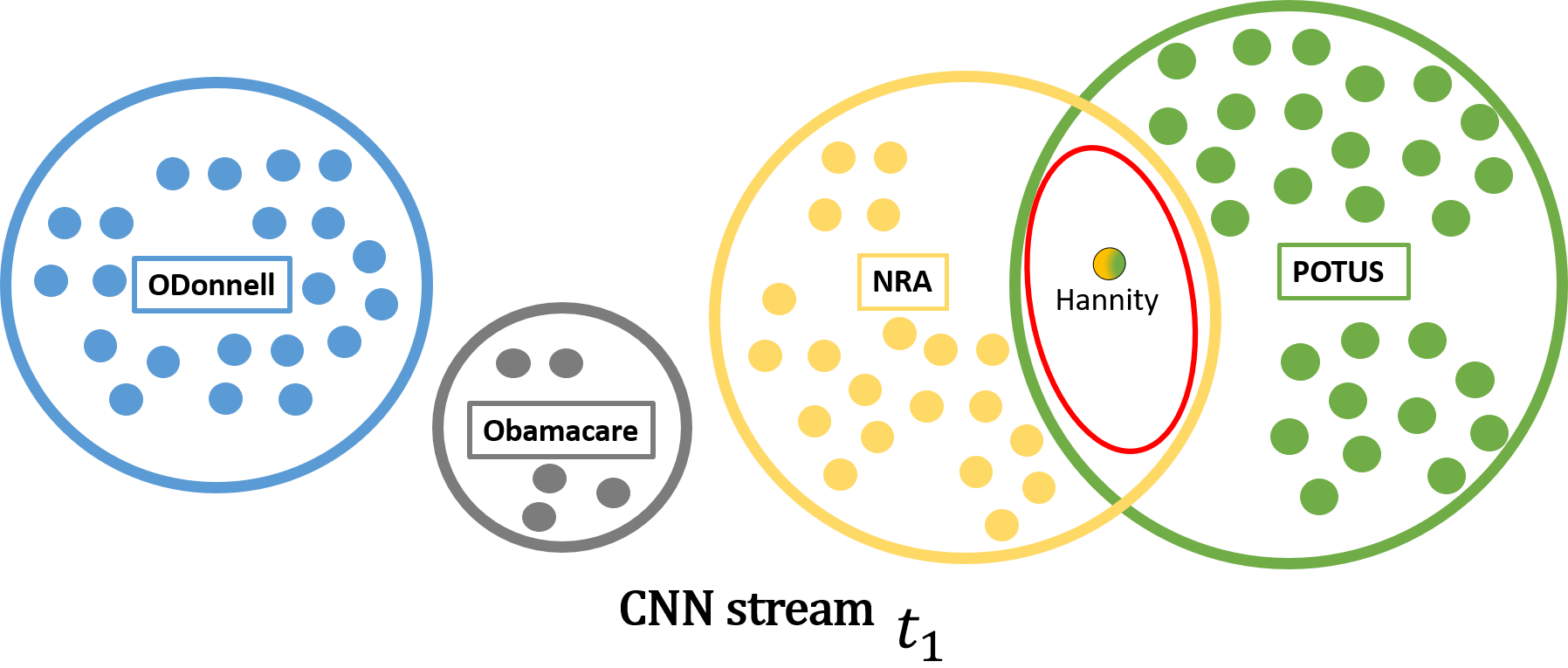}
  \caption{Nodes in an  intersection of clusters for the input CNN, POTUS, NRA \label{search_comp}}
\end{figure*}
\subsection{Correlation between tags at time $t$}
We extend the amortized correlation of section 2 to tags, i.e. node labels. Given a tag $\sigma$, we first check if it is a stream, the name of a cluster, or a simple node, using the tables {\em Stream, Clusters, Nodes}.  In each case we recover the most recent clusters  $C_{t_i}$  of a stream $\sigma'$, or of the nodes.  By construction, all the tags have at least one component they belong to. 

Given two tags $\sigma_1$ and $\sigma_2$, we retrieve their most recent components, $C_{1, t_i}$  and $C_{2, t_j}$ of the streams $\sigma_1'$ and $\sigma_2'$ and suppose $t_i>t_j$ and $\Delta(u)=t_{i}-t_{j}$. Assume $\dist$ is the distance between the streams $\sigma_1'$ and $\sigma_2'$  in the phylogeny tree. If  one tag belongs to  another component, we output the nodes of high-degree of the components. If 
these components intersect, then the tags of the intersection have a correlation coefficient   of
$(1- \frac{\Delta}{t}).(\frac{t_{i}}{t}) (1-\dist)$. If a tag is in several intersections, we add the correlations.

If the components  do not intersect, we look for another component $C'$ of another stream close  to $\sigma_1'$ and $\sigma_2'$ (using the phylogeny tree) which intersects  $C_{1, t_i}$  and $C_{2, t_j}$, or we look for older components $C_{1, t'_i}$  and $C_{2, t'_j}$ where $t'_i<t_i$ or $t'_j<t_j$ which do intersect. We generalize this definition for more than $2$ tags. We can synthetize the {\em Search Algorithm} as follows:

 {\bf Search by correlation Algorithm } $A_4(\sigma_1,....\sigma_l)$:
\begin{itemize}
\item  For each tag $\sigma_i$, find the most recent components $C_{i, t_i}$. Output the nodes with the highest correlation coefficient,
\item  If there  are no tags in the  intersections, look at close streams (using the phylogeny tree) and their components.
\end{itemize}

The nodes in some intersection of recent components will have the highest correlation and will be the in the top answers. The {\em explanation} of the search will be these clusters $C_{i, t_i}$, which are associated with the given tags. In the example for Figure \ref{search_comp}, the first answer \#Hannity belongs to 
to two CNN clusters, NRA and POTUS.

\subsection{Experimental results}
In the first example, we are given the tag $\sigma_1$=CNN. As the table \ref{my-label} indicates, the answers are the nodes of high-degree of the most recent component of the stream CNN. Notice that the anwers depend on $t$, for the two examples $t=12h$ and $t=24h$.  For the tags $\sigma_2$=CNN, POTUS,  we retrieve the most recent components and in this case, POTUS belongs a component of CNN: we output the nodes of high degree of that component.

 For the tag, $\sigma_3$=CNN, POTUS, NRA,  we retrieve the three most recent components and in this case, the component of POTUS intersects the component of NRA and are both components of CNN. We output the nodes of the intersection, ordered by degree.

\section{Conclusion}

We introduced {\em the content correlation $\rho$ between two graphs}  and its extension $\rho(t)$ between two {\em dynamical graphs} based on the Jaccard similarity of their clusters.  In the model of Uniform and Concentrated Dynamics for graphs with a power law degree distribution, we showed that the detection of a large connected component in a Reservoir built from  uniform edge samples is a good method to distinguish  a Uniform Dynamics from  a Concentrated Dynamics, when $S$ is large enough. This method generalizes to dynamic graphs and we can compute the content  correlation  of two streams with an online  algorithm (Algorithm 3).

As we read different streams of edges in the window model, we only store the large connected components  at some times $t_1, t_2,....$.  In our experiments, we followed $4$ Twitter streams for $24$h, reading $2.10^6 $ edges,  but kept approximately $2.10^4$ nodes, i.e. $1\%$ of the data.  From the correlation matrix, we obtained the closest phylogeny tree. We defined the {\em  Search by correlation} on some given tags, where the  answers are tags ordered by correlation. The witnessed  used as explanations of the correlations are the stored clusters.

\bibliographystyle{plain}
\bibliography{x1}

\end{document}